\newcommand{\R}{\mathbb{R}}
\newtheorem{theorem}{Theorem}[section]
\newtheorem{assumption}{Assumption}
\newtheorem{definition}[theorem]{Definition}
\newtheorem{lemma}[theorem]{Lemma}
\newtheorem{remark}[theorem]{Remark}
\newtheorem{problem}[theorem]{Problem}
\newtheorem{proof}[theorem]{Proof}
\title{Spatiotemporal Tubes based Controller Synthesis against Omega-Regular Specifications for Unknown Systems
\thanks{The work is supported in part by the Kotak-IISc AI/ML Center.}
}
\author{
 Aiman Aatif Bayezeed\thanks{Authors contributed equally.} \\
  Centre for Cyber-Physical Systems\\
  IISc, Bengaluru, India\\
  \texttt{aimanatifb@iisc.ac.in} \\
   \And
 Ratnangshu Das$^\dag$ \\
  Centre for Cyber-Physical Systems\\
  IISc, Bengaluru, India\\
  \texttt{ratnangshud@iisc.ac.in} \\
  \And
 Pushpak Jagtap \\
  Centre for Cyber-Physical Systems\\
  IISc, Bengaluru, India\\
  \texttt{pushpak@iisc.ac.in} \\
}
\begin{document}
\maketitle

\begin{abstract}
This paper provides a discretization-free solution to the synthesis of approx-imation-free closed-form controllers for unknown nonlinear systems to enforce complex properties expressed by $\omega$-regular languages, as recognized by Non-deterministic Büchi Automata (NBA). In order to solve this problem, we first decompose NBA into a sequence of reach-avoid problems, which are solved using the Spatiotemporal Tubes (STT) approach. Controllers for each reach-avoid task are then integrated into a hybrid policy that ensures the fulfillment of the desired $\omega$-regular properties. We validate our method through omnidirectional robot navigation and manipulator control case studies.
\end{abstract}

\section{Introduction}
\label{sec1}

In recent years, the need to address complex specifications has gained increasing importance in the fields of robotics, autonomous systems, and control theory. The specifications expressed using temporal logic formulae \cite{pnueli1977temporal}  or (in)finite strings over automata \cite{thomas1990automata} provide a powerful and expressive framework for formally describing system behaviors. As robotic systems grow more complex and capable of executing long-term tasks, designing controllers that satisfy such specifications is essential for applications in autonomous navigation, multi-agent coordination, and safety-critical systems.

To tackle these complex tasks, abstraction-based techniques \cite{tabuada2009verification} have been widely employed, where the continuous dynamics of a system is abstracted into a finite abstraction using state and input discretization. However, this approach suffers from significant computational burdens, especially in high-dimensional systems, where the state space discretization results in an exponential growth in the number of states, making it computationally impractical for real-time control. In addition, they rely on the exact knowledge of the dynamics.

To overcome the scalability of abstraction-based techniques, recent approaches have explored discretization-free methods using control barrier functions (CBFs) \cite{jagtap2020formal, anand2021compositional,sundarsingh2023scalable}. CBFs provide a way to synthesize controllers that guarantee system safety while satisfying temporal logic constraints. However, designing an appropriate CBF for a system is challenging. Existing methods typically rely on parametric forms of CBFs \cite{jagtap2020formal, anand2021compositional, ramasubramanian2019linear}, and involve computationally intensive numerical searches, which limit scalability, particularly when dealing with complex systems and specifications. Some studies have also explored optimization-based CBF methods for temporal-logic motion planning \cite{he2020bp, srinivasan2020control}. However, these approaches do not provide closed-form controllers and often depend on computational methods
that are numerically intractable and lack feasibility guarantees. Moreover, CBF-based techniques require precise system dynamics, which can be a serious constraint for real-world systems.

An alternative approach is funnel-based control \cite{bechlioulis2014low}, which helps synthesize closed-form hybrid control policies to enforce complex temporal specifications. Previous work, such as \cite{lindemann2017prescribed, lindemann2019feedback}, addressed fragments of signal temporal logic (STL), while \cite{verginis2018timed} focuses on metric interval temporal logic (MITL) for cooperative manipulation using state-space discretization. However, these results are limited to fragments of temporal logic and do not apply to more general omega-regular languages \cite{thomas1990automata}. In \cite{NAHS}, a funnel-based approach was proposed to decompose temporal logic specifications, recognized by $\omega$-regular languages or non-deterministic Büchi automata \cite{buchi1990decision}, into a sequence of reachability tasks and combine the resulting controllers into a hybrid control policy. However, this approach imposes restrictive assumptions that the avoid region must not intersect with the funnel for all time, which limits its applicability to more realistic scenarios involving nontrivial avoid specifications. Additionally, this result was derived under the assumption of known system dynamics with a control-affine structure, limiting its applicability.

To address these challenges, this paper introduces a spatiotemporal tubes (STT)-based approach \cite{STT} to synthesize approximation-free and closed-form hybrid control policies that satisfy $\omega$-regular specifications recognized by non-deterministic Büchi automata. This method provides more flexibility, addressing a broader range of reach-avoid tasks. By decomposing the NBA into a sequence of reach-avoid problems, each task is solved using STTs, ensuring robust safety and temporal guarantees. The controllers are then integrated into a hybrid policy that ensures that the system adheres to the desired omega-regular specifications, offering a scalable and efficient solution for complex real-world systems.
The key contributions of the paper can be summarized as follows.
\begin{enumerate}
    \item The proposed approach is applicable to more general strict feedback Multiple-Input Multiple-Output (MIMO) systems with unknown dynamics compared to \cite{NAHS}.
    \item In contrast to \cite{NAHS}, the proposed approach is capable of addressing a broad range of $\omega$-regular specifications by relaxing assumptions on the positions of the obstacles.
    \item In contrast to \cite{jagtap2020formal}, \cite{anand2021compositional}, and  \cite{NAHS}, the proposed approach provides a closed-form control law that is robust and approximation-free (i.e., it is capable of addressing systems with unknown dynamics subjected to unknown bounded disturbances).
\end{enumerate}

The paper is organized as follows. Section \ref{sec2} introduces the system dynamics and the $\omega$-regular property specifications, followed by the problem definition. Section \ref{sec3} outlines the mapping of $\omega$-regular specifications into a sequence of reach-avoid tasks and their resolution through a spatiotemporal tubes-based hybrid control framework. Section \ref{sec5} demonstrates the approach's practical applicability through case studies on 2R robot manipulator control and omnidirectional robot navigation, showcasing its computational efficiency and effectiveness. Finally, Section \ref{conclusion} concludes the paper. 

\section{Preliminaries and Problem Statement}
\label{sec2}

\subsection{Notations}
\label{subsec2_1}

The symbols $\mathbb{N}, \mathbb{N}_0, \mathbb{Z}, \mathbb{R}, \mathbb{R}^+,$ and $\mathbb{R}_0^+$ denote the set of natural, nonnegative integer, integer, real, positive real, and nonnegative real numbers, respectively. We use $\mathbb{R}^{n\times m}$ to denote a vector space of real matrices with $n$ rows and $m$ columns. We use $\left \lVert \cdot \right \rVert$ to represent the Euclidean norm. For $a,b \in \mathbb{R}$ and $a < b$, we use $(a,b)$ to represent the open interval in $\mathbb{R}$. For $a,b \in \mathbb{N}$ and to represent the Euclidean norm. We use $I_n$ and $0_{n\times m}$ to denote the identity matrix in $\mathbb{R}^{n \times n}$ and the zero matrix in $\mathbb{R}^{n \times m}$, respectively. A diagonal matrix in $\mathbb{R}^{n \times n}$ with diagonal entries $d_1, ..., d_n$ is denoted by $diag \{ d_1, ... , d_n \}$. Given a matrix $M \in \mathbb{R}^{n \times m},$ $ M^T$ represents the transpose of matrix $M$. Given a matrix $P \in \mathbb{R}^{n \times n}$, Tr($P$) represents trace of matrix $P$. Given a set $A$, we use $|A|$ to represent the cardinality of the set $A$. $x_i, i \in [1;n]$ denoted the $i$-th element of the vector $x \in \mathbb{R}^n$. Consider $N$ sets $A_i, i \in \{ 1,...,N \}$, the Cartesian product of the sets is given by $A = \Pi_{i \in \{a_1, ..., a_N\}} A_i := \{ (a_1, ..., a_N) \, | \, [ x_1, x_2, ..., x_n ] \in X_a \};$ and $Int(X_a)$ represents the interior of the set $X_a$. We denote the empty set by $\phi$. Given a set $S$, the notation $|S|$ denotes the cardinality of $S$; $S^*$ and $S^\omega$ denote the set of all finite and infinite strings over $S$, respectively. Given a set $U$ and $S \subset U$, the complement of $S$ with respect to $U$ is defined as $U \setminus S = \{ x : x \in U, x \notin S \}$.

\subsection{System Description}
\label{subsec2_2}

Consider a class of control-affine MIMO nonlinear pure-feedback systems characterized by the following dynamics:
\begin{align} \label{eqn:sysdyn}
    \mathcal{S}: \left\{\begin{matrix}\dot{x}_k(t) = f_k(\overline{x}_k(t)) + g_k(\overline{x}_k(t))x_{k+1}(t) + w_k(t), k\in [1;N-1], \hfill\\
    \dot{x}_{N}(t) = f_N(\overline{x}_N(t)) + g_N(\overline{x}_N(t))u(t) + w_N(t), \hfill\\
    y(t) = x_1(t),\hfill\end{matrix}\right.
\end{align}
where for $t\in\R^+_0$ and $i\in[1;N]$,
\begin{itemize}
    \item $x_k(t) = [x_{k,1}(t), \ldots, x_{k,n}(t)]^\top \in {X}_k \subset \mathbb{R}^{n}$ is the state vector, 
    \item $\overline{x}_k(t) := [x_1^\top(t),x_2^\top(t),...,x_k^\top(t)]^\top \in \overline{X}_k = \prod_{j=1}^k X_j \subset \mathbb{R}^{ni} $, 
    \item $u(t) \in \mathbb{R}^n$ is the control input vector,
    \item $w_k(t) \in W \subset \R^n$ is the unknown bounded external disturbance, and
    \item $y(t) = [y_1(t), \ldots, y_n(t)] = [x_{1,1}(t), \ldots, x_{1,n}(t)]\in Y=X_1$ denotes the output vector.
\end{itemize}
We denote the value of an output trajectory of $\mathcal{S}$ starting from $y_0 \in Y$ under a control input signal $u$ at time $t\in\R^+_0$ by $y_{y_0u}(t)$.
The functions $f_k: \overline{X}_k \rightarrow \mathbb{R}^n$, $g_k: \overline{X}_k \rightarrow \mathbb{R}^{n \times n}, k \in [1;N]$, follow the assumptions \ref{assum:lip} and \ref{assum:pd}.

\begin{assumption}\label{assum:lip}
    For all $k \in [1;N]$, functions $f_k$ and $g_k$ are unknown and locally Lipschitz.
\end{assumption}
\begin{assumption}[\cite{PPC1,PPC0}] \label{assum:pd}
    The matrix $g_{k,s}(\overline{x}_i) = \frac{g_k(\overline{x}_k)+g_k(\overline{x}_k)^\top}{2}$ is uniformly sign definite with known signs for all $\overline{x}_k \in \overline{X}_k$. Without loss of generality, one can assume that $g_{k,s}(\overline{x}_k)$ is positive definite, that is, for all $\overline{x}_k \in \overline{X}_k,$ there exists a constant $\underline{g_k}\in\mathbb R^+, \forall k \in [1;N]$ such that
    $$0 < \underline{g_k} \leq \lambda_{\min} (g_{k,s}(\overline{x}_k)),$$
    where $\lambda_{\min}(\cdot)$ represents the smallest eigenvalue of the matrix.
    This assumption establishes a global controllability condition for \eqref{eqn:sysdyn}.
\end{assumption}

\subsection{Class of Specifications}
\label{subsec2_3}
This work aims to synthesize controllers for unknown systems \eqref{eqn:sysdyn} to guarantee the satisfaction of $\omega$-regular properties, which are commonly used to describe complex, infinite-horizon behaviors in dynamical systems. These properties are recognized by various types of automata that accept infinite words, including non-deterministic B{\"u}chi automata (NBA) \cite{buchi1990decision}, deterministic Rabin automata \cite{rabin1969decidability}, deterministic Streett automata \cite{streett1982propositional}, parity automata or Muller automata \cite{muller1963infinite}.  Despite differences in acceptance conditions, they share the same expressive power in recognizing $\omega$-regular languages. In this work, we focus on using nondeterministic Büchi automata (NBA) to describe $\omega$-regular properties in control systems.

\begin{definition}[\cite{buchi1990decision}]\label{Def:NBA}
    A nondeterministic B{\"u}chi automaton (NBA) is defined as a tuple $\mathcal{A} = (\mathcal{Q},\mathcal{Q}_0, \Sigma, \delta, F)$, where $\mathcal{Q}$ is a finite set of states, $\mathcal{Q}_0 \subseteq \mathcal{Q}$ is the set of initial states, $\Sigma$ is a finite alphabet, $\delta : \mathcal{Q} \times \Sigma \rightarrow 2^\mathcal{Q}$ is a transition function, and $F \subseteq \mathcal{Q}$ is the set of accepting states.
\end{definition}

A transition $(q,\sigma,q') \in \delta$ in the NBA $\mathcal{A}$ is denoted as $q \xrightarrow{\sigma}_\mathcal{A} q'$. Consider an infinite state-run, $q = (q_0, q_1,...) \in \mathcal{Q}^\omega$ and an infinite word (a.k.a. trace) $\sigma = (\sigma_0,\sigma_1,...) \in \Sigma^\omega$. The run begins at an initial state, $q_0 \in \mathcal{Q}_0$, and for each $i \in \mathbb{N}_0$, the transition $q_i \xrightarrow{\sigma_i}_\mathcal{A} q_{i+1}$ holds. Let the set of states that appear infinitely often in the run be denoted by Inf(q). An infinite word $\sigma = (\sigma_0, \sigma_1,...) \in \Sigma^\omega$ is accepted by NBA $\mathcal{A}$ if there exists an infinite state run q corresponding to $\sigma$ such that Inf(q) $\cap F \neq \phi$. The set of words accepted by $\mathcal{A}$ is called the accepting language of $\mathcal{A}$, denoted by $\mathcal{L}(\mathcal{A})$.
We consider specifications defined by the accepting languages of NBA $\mathcal{A}$, where the alphabets are defined over the set of atomic propositions $\Pi$, i.e., $\Sigma = 2^\Pi$. 

\begin{remark}
    The approach proposed in this paper can handle $\omega$-regular languages represented by deterministic Streett automata and regular languages expressed via (non)deterministic finite automata (NFA) \cite{baier2008principles}. From the perspective of temporal logic, nondeterministic B{\"u}chi automata and deterministic Streett automata can capture the full range of linear temporal logic (LTL) properties with conversions facilitated by tools such as SPOT \cite{duret2016spot}, LTL2BA \cite{gastin2001fast}, and ltl2dstar \cite{klein2007ltl2dstar}. The LTL specifications over finite traces (LTL$_f$) \cite{de2013linear} can be translated into deterministic finite automata (DFA) using tools like MONA \cite{henriksen1995mona}.
\end{remark}

\subsection{Satisfaction of Specification by Systems}
\label{subsec2_4}

Let a system $\mathcal{S}$ as defined in \eqref{eqn:sysdyn} be assigned a specification described by the accepting language of an NBA $\mathcal{A}$. These specifications are defined over a set of atomic propositions $\Pi$, and the connection between the system and the specification is established through a labeling function $L: Y \rightarrow \Pi$. This function maps the output of the system $y \in Y$ to the corresponding atomic proposition, translating the system's behavior into a trace over $\Pi$.

\begin{definition}[\cite{wongpiromsarn2015automata}]\label{Def:Trace_of_system}
    For a given output trajectory $y_{y_0u}$ of the system $\mathcal{S}$ in \eqref{eqn:sysdyn} and a labeling function $L : Y \rightarrow \Pi$, the infinite sequence $\sigma(y_{y_0u}) = (\sigma_0, \sigma_1, ...) \in \Pi^\omega$ is called an infinite trace of the output trajectory $y_{y_0u}$, if there exists a time sequence $t_0, t_1, ...$ with $t_0 = 0, t_r \rightarrow \infty$ as $r \rightarrow \infty$, such that for all $j \in \mathbb{N}, t_j \in \mathbb{R}_0^+$: 
\begin{itemize}
    \item $t_j < t_{j+1};$
     \item The output of the system at a time belongs to the pre-image of the corresponding label, i.e., $y_{y_0u}(t_j) \in L^{-1}(\sigma_j);$
    \item If $\sigma_j \neq \sigma_{j+1},$ then for some $t'_j \in [t_j, t_{j+1}],$ $ y_{y_0u}(t) \in L^{-1}(\sigma_j)$ for all $t \in (t_j, t'_j)$; $y_{y_0u}(t) \in L^{-1}(\sigma_{j+1})$ for all $t \in (t'_j, t_{j+1})$; and either $y_{y_0u}(t'_j) \in L^{-1}(\sigma_j)$ or $y_{y_0u}(t'_j) \in L^{-1}(\sigma_{j+1})$.
\end{itemize} 
\end{definition}
 
Next, we formally define the satisfaction of specification given by the language of $\mathcal{A}$ by the output trajectory of the system $ \mathcal{S}$.

\begin{definition}
    The output trajectory $y_{y_0u}$ of system $ \mathcal{S}$ in \eqref{eqn:sysdyn}, starting from $y_0 \in Y$ and under an input signal $u$, satisfies the specification given by the accepting language of an NBA $\mathcal{A}$, denoted by $\sigma(y_{y_0u}) \models \mathcal{A}$, if the infinite trace generated by its trajectory $\sigma(y_{y_0u})$ as defined in Definition \ref{Def:Trace_of_system} satisfies $\sigma(y_{y_0u}) \in \mathcal{L}(\mathcal{A})$.
\end{definition}

\subsection{Problem Definition}
\label{subsec2_5}
The controller synthesis problem addressed in this work is defined below.
\begin{problem}\label{problem}
    Consider an unknown system $\mathcal{S}$ in \eqref{eqn:sysdyn} satisfying the assumptions \ref{assum:lip} and \ref{assum:pd}, a specification given by the accepting language of an NBA $\mathcal{A} = (\mathcal{Q},\mathcal{Q}_0, \Sigma, \delta, F)$ defined over a set of atomic propositions $\Pi = \{p_0,p_1, ..., p_M \}$, and a labeling function $L: Y \rightarrow \Pi$. The aim is to design an approximation-free closed-form hybrid controller $u$ such that $\sigma(y_{y_0u}) \vDash \mathcal{A}$ for all $y_0\in L^{-1}(p_i)$ for some $i \in \{1,\ldots, M\}$ $($Note that the controller will depend on the initial state, meaning different controllers will be designed for different starting regions corresponding to each initial atomic proposition$)$.
\end{problem}

To solve this problem, we first decompose the specification given by the language of NBA $\mathcal{A}$ into a sequence of simpler reach-avoid specifications. Then, we introduce a spatiotemporal tube-based control framework \cite{STT} to solve these reach-avoid specifications.

\section{Reach-Avoid tasks using Spatiotemporal Tubes}
\label{sec3}
{In this section, we present an approach inspired by \cite{NAHS} to decompose complex specifications, expressed in the language of NBA, into a sequence of simpler reach-avoid (RA) tasks. This decomposition enables the handling of intricate $\omega$-regular requirements by breaking them into manageable sub-tasks. Each RA task is then addressed using the spatiotemporal tubes-based method presented in \cite{STT}. Finally, we propose a closed-form, approximation-free hybrid control policy that is capable of satisfying the original specifications defined by the language of an NBA.}

\subsection{Decomposing Specifications to sequence of Reach-Avoid Tasks}
\label{subsec_mapping}

{To decompose an NBA into a sequence of RA tasks, we begin by considering the NBA $\mathcal{A} = (\mathcal{Q}, \mathcal{Q}_0, \Pi, \delta, F)$, which corresponds to an $\omega$-regular language that defines the properties of interest for the system $\mathcal{S}$.} \\ 

{\textbf{Accepting State Runs.}
Given an accepting run of the NBA $q$, representing a sequence of states that satisfy the specification, the corresponding infinite words are denoted by $\sigma$(q) $\subseteq$ $\Pi^\omega$. Similarly, the finite words associated with the finite state runs are denoted by $\sigma(\overline{\text{q}}) \in \Pi^n$, where $ \overline{\text{q}} \in \mathcal{Q}^{n+1}$ and $n \in \mathbb{N}$. It is known that a word $\sigma \in \Pi^\omega$ is accepted by $\mathcal{A}$ if there exists a state run of the form q $= (q_0^r,q_1^r, ..., q_{m_r}^r, (q_0^s, q_1^s, ..., q_{m_s}^s)^\omega) \in \mathcal{Q}^\omega$, where $m_r, m_s \in \mathbb{N}, q_0^r \in \mathcal{Q}_0 $ and $q_0^s \in F$.} \\

{\textbf{Finite State-run Fragments.}
Let $\overline{\text{q}}$ be a finite state run fragment of an accepting run $q$, which is constructed by considering the infinite sequence $(q_0^s, q_1^s, ..., q_{m_s}^s)$ only once. This is expressed as $\overline{\text{q}} = (q_0^r,q_1^r, ..., q_{m_r}^r$, $q_0^s, q_1^s, ..., q_{m_s}^s, q_0^s, q_1^s) \in \mathcal{Q}^*$.
Let $\mathcal{R}$ represent the set of all such finite state-run fragments, excluding self-loops:
\begin{gather} \label{eqn4.1}
\mathcal{R} := \{ \overline{\text{q}} = (q_0^r,q_1^r, ..., q_{m_r}^r, q_0^s, q_1^s, ..., q_{m_s}^s, q_0^s, q_1^s) \; | \; q_0^r \in \mathcal{Q}_0, \nonumber \\ 
q_0^s \in F, q_i^r \neq q_{i+1}^r, \forall i < m_r, \text{ and } q_j^s \neq q_{j+1}^s, \forall j < m_s \}.
\end{gather}
The set $\mathcal{R}$ can be computed algorithmically by viewing $\mathcal{A}$ as a directed graph $\mathcal{G} = (\mathcal{V}, \mathcal{E})$ where the vertices $\mathcal{V} = \mathcal{Q}$ represent states, and the edges $\mathcal{E} \subseteq \mathcal{V} \times \mathcal{V}$ represent transitions between states. Specifically, $(q,q') \in \mathcal{E}$ if and only if $q' \neq q$ and there exist $p \in \Pi$ such that $q \xrightarrow{p}_\mathcal{A} q'$. A path in $\mathcal{G}$ is defined as a finite sequence of states $(q_0, q_1, ..., q_{\Tilde{n}}) \in \mathcal{Q}^{\Tilde{n}},$ for $\Tilde{n} \in \mathbb{N},$ satisfying $(q_i, q_{i+1}) \in \mathcal{E}$ for all $i \in [0, ..., \Tilde{n}-1]$. The atomic proposition associated with each edge $(q,q')$ is denoted by $\sigma (q,q')$. The variants of depth-first search can be used to compute $\mathcal{R}$ from $\mathcal{G}$ \cite{russell2003artificial}. It is worth noting that in the proposed approach, we do not need to find a complete set $\mathcal{R}$.}\\

{\textbf{Partitioning Based on Initial Atomic Propositions.}
For each $p \in \Pi$, we define the set $\mathcal{R}^p$ as:
\begin{equation} \label{eqn4.2}
    \mathcal{R}^p := \{ \overline{\text{q}} = (q_0^r,q_1^r, ..., q_{m_r}^r, q_0^s, q_1^s, ..., q_{m_s}^s, q_0^s, q_1^s) \in \mathcal{R} \; | \; \sigma(q_0^r,q_1^r) = p \}.
\end{equation}
This notation, with the superscript $p$, partitions of set $\mathcal{R}$ based on the initial atomic proposition of the finite state-run fragments defined earlier. This is particularly useful in designing hybrid controllers for each initial state set $L^{-1}(p_i), i \in \{ 1,2, ..., M \}$.} \\

{\textbf{Decomposition into Triplets.}
The decomposition is done by breaking down each state-run fragment into smaller segments. For any $\Bar{q} = (q_0,q_1, ..., q_{{m_r}+{m_s}+3}) \in \mathcal{R}^p$, we define $\mathcal{P}^p (\Bar{\text{q}})$ as the set of all state runs of length 3 or triplets, given by:
\begin{equation} \label{eqn4.3}
    \mathcal{P}^p (\Bar{\text{q}}) := \{ (q_i, q_{i+1}, q_{i+2}) \; | \; 0 \leq i \leq m_r + m_s +1 \}.
\end{equation}}

{We kindly refer the interested reader to Example 1 in \cite{NAHS}, which illustrates the process of generating triplets for a given NBA.}
\begin{figure}[ht]
    \centering
    \includegraphics[page={5}, width= 0.5\textwidth]{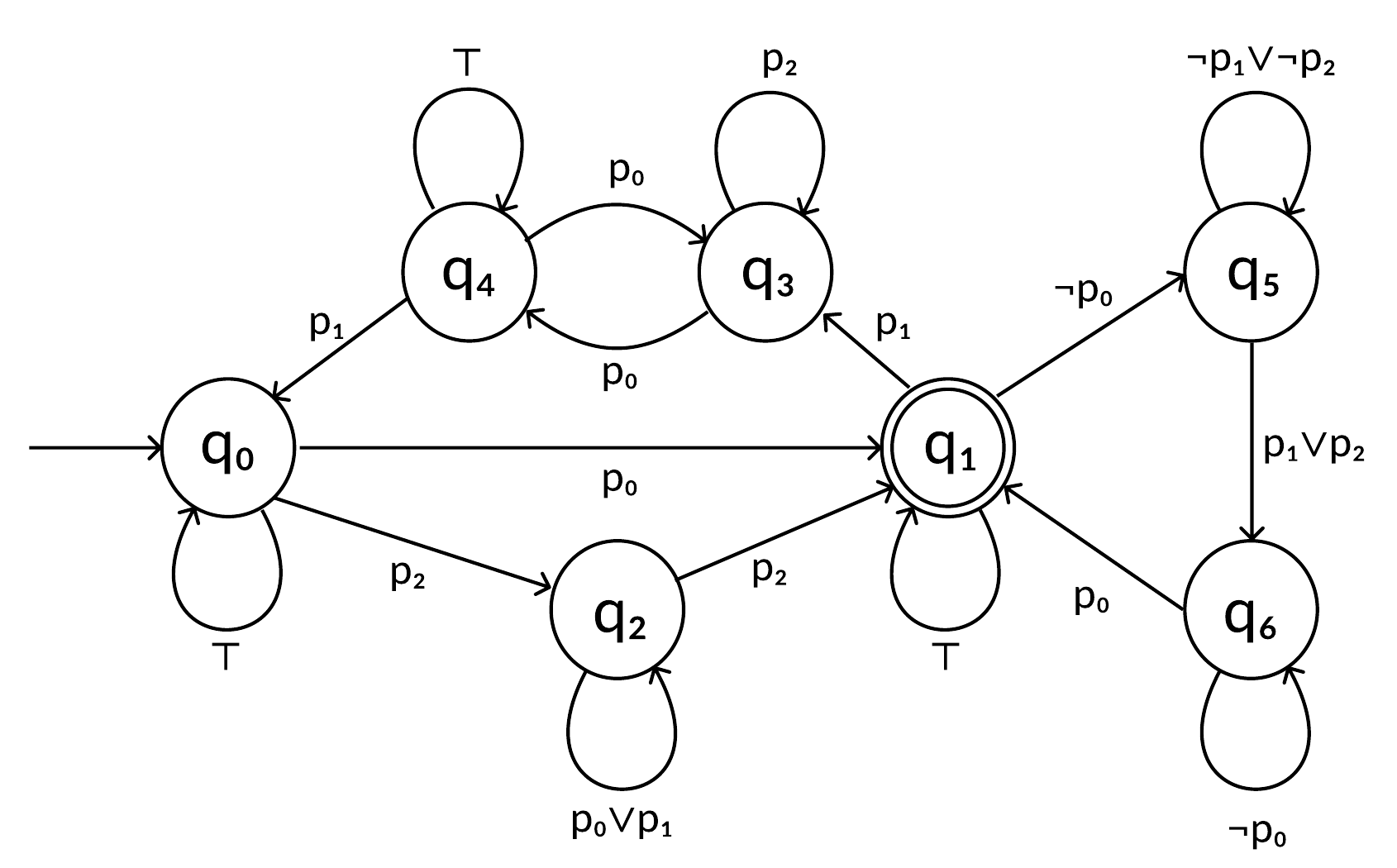}
    \caption{Illustrative automata representing triplets}
    \label{fig:example}
\end{figure}

{\textbf{Representing Triplets as RA tasks.}
As illustrated in Figure \ref{fig:example}, each triplet $\nu=(q,q',q'') \in \mathcal{P}^p (\overline{\text{q}})$ can be interpreted as a reach-avoid task, defined by:
\begin{itemize}
    \item Initial set $S := L^{-1}(\sigma(q,q'))$,
    \item Target set $T := L^{-1}(\sigma(q',q''))$,
    \item Unsafe set $U := Y \setminus \left( L^{-1}(\sigma(q',q')) \cup L^{-1}(\sigma(q,q')) \cup L^{-1}(\sigma(q',q'')) \right)$, where $Y = L^{-1}(\top) $.
\end{itemize}
This mapping translates each triplet into a control task, ensuring that the system starts in S, reaches T, and avoids U, facilitating systematic controller design.}

\subsection{Designing Spatiotemporal Tubes}
Given a reach-avoid task, that is, starting from set $S$, reach the target set $T$ while avoiding the unsafe set $U$, the spatiotemporal tubes (STTs) \cite{STT} are defined by the time-varying intervals $[\gamma_{i,L}(t), \gamma_{i,U}(t)]^\top$, where $\gamma_{i,L}:\mathbb{R}_0^+ \rightarrow \mathbb{R}$ and $\gamma_{i,U}:\R_0^+ \rightarrow \R$ are continuously differentiable functions with $\gamma_{i,L}(t) < \gamma_{i,U}(t)$ for all time $t \in \R_0^+$ and for each dimension $i \in [1;n]$, and the following hold:
\begin{align} \label{eqn:stt}
    &\prod_{i=1}^n [\gamma_{i,L}(0), \gamma_{i,U}(0)]^\top \subseteq S, \nonumber \\
    &\prod_{i=1}^n [\gamma_{i,L}(t_c), \gamma_{i,U}(t_c)]^\top \subseteq T, \text{ for some } t_c \in \R^+, \nonumber \\
    &\prod_{i=1}^n [\gamma_{i,L}(t), \gamma_{i,U}(t)]^\top \cap U = \emptyset, \forall t \in \R_0^+.
\end{align}

The construction of spatiotemporal tubes (STTs), as outlined in \cite{STT}, involves three sequential steps to satisfy the reach-avoid specifications:
\begin{enumerate}
    \item Design of Reachability Tubes: The process begins by constructing reachability tubes that guide the system's trajectory from the initial set $S$ to the target set $T$, ignoring the unsafe set $U$.
    \item Introduction of Circumvent Function \cite{das2024funnel}: To handle the avoid specifications, this step identifies time intervals during which the reachability tubes intersect with the unsafe set $U$. During this time interval, a circumvent function is defined, which acts as a spatiotemporal re-routing mechanism to steer the trajectory away from unsafe set $U$.
    \item Adaptive Tube Adjustment: In the final step, the framework leverages an adaptive mechanism to continuously adjust the reachability tubes in response to the circumvent function. The adaptive framework ensures that the adjusted tubes are modified around the circumvent function while maintaining proximity to the original reachability tube.
\end{enumerate}
Alternatively, a sampling-based approach can also be used, as outlined in \cite{das2024spatiotemporal}, to construct a spatiotemporal tube satisfying \eqref{eqn:stt}.

The reach-avoid specification can be enforced by designing a controller that constrains the output trajectory $y(t) = x_1(t)$ within the STTs as: 
\begin{gather}
    \gamma_{i,L}(t) < y_i(t) < \gamma_{i,U}(t), \forall (t,i) \in \R_0^+ \times [1;n]. \label{eqn:ppc}
\end{gather}

\subsection{Controller Design}
In this section, we derive an \textit{approximation-free}, \textit{closed-form} control law using STTs, similar to \cite{PPC2}, to satisfy \eqref{eqn:ppc}. The lower triangular structure of \eqref{eqn:sysdyn} allows us to use a backstepping-like design approach similar to that described in \cite{feedback}. First, we design an intermediate control input $r_2$ for the $y = x_1$ dynamics to ensure the fulfillment of \eqref{eqn:ppc}. We then iteratively design the intermediate control laws $r_{k+1}$ for the $x_k$ dynamics, ensuring $x_k$ tracks $r_k$, for all $k \in [2;N]$. It is important to note that $r_{N+1}$ effectively becomes the actual control input $u$ for the system. The steps of the controller design are outlined below.

\textbf{Stage $1$}: Given $\gamma_{1,i,L}(c_{i,L} ,t)$ and $\gamma_{1,i,U}(c_{i,U} ,t)$, $i\in[1;n]$, define the normalized error $e_1(x_1(t), \gamma_1(t))$, the transformed error $\varepsilon_1(x_1(t), \gamma_1(t))$ and the diagonal matrix $\xi_1(x_1(t), \gamma_1(t))$ as
\begin{subequations} \label{eq:stage I}
   \begin{align}
    e_1(x_1(t), \gamma_1(t)) &= [e_{1,1}(x_{1,1},t), \ldots, e_{1,n}(x_{1,n},t)]^\top = (\gamma_{1,d} (t))^{-1} \left( 2x_1 - \gamma_{1,s} (t) \right),   \\
    \varepsilon_1(x_1(t), \gamma_1(t)) &= \big[\ln\left(\frac{1+e_{1,1}(x_{1,1},t)}{1-e_{1,1}(x_{1,1},t)}\right), \ldots, \ln\left(\frac{1+e_{1,n}(x_{1,n},t)}{1-e_{1,n}(x_{1,n},t)}\right) \big]^\top ,\\
    \xi_1(x_1(t), \gamma_1(t)) &= \frac{4 (\gamma_{1,d} (t))^{-1}}{1-e_1^\top(x_1(t), \gamma_1(t))e_1(x_1(t), \gamma_1(t))},
    \end{align} 
\end{subequations}
where, $\gamma_{1} := [\gamma_{1,1,U}, \gamma_{1,1,L}, \ldots, \gamma_{1,n,U}, \gamma_{1,n,L}]^\top$, $\gamma_{1,s} := [\gamma_{1,1,U} + \gamma_{1,1,L}, \ldots, \gamma_{1,n,U} + \gamma_{1,n,L}]^\top$ and $\gamma_{1,d} := \textsf{diag}(\gamma_{1,1,d},\ldots,\gamma_{1,n,d})$, with $\gamma_{1,i,d} = \gamma_{1,i,U} - \gamma_{1,i,L}$.

The intermediate control input $r_2(x_1(t), \gamma_1(t))$ is given by: 
\begin{equation*}
    r_2(x_1(t), \gamma_1(t)) = - \kappa_1\varepsilon_1(x_1(t), \gamma_1(t))\xi_1(x_1(t), \gamma_1(t)), \kappa_1 \in \R^+.
\end{equation*}

{\textbf{Stage $k$} ($k \in [2;N]$):} Given the reference vector $r_k(\overline{x}_{k-1}(t),\overline{\gamma}_{k-1}(t))$, we aim to design the subsequent intermediate control $r_{k+1}(\overline{x}_{k}(t), \overline{\gamma}_{k}(t))$ for the dynamics of $x_{k}$, ensuring that $x_{k}$ tracks the trajectory determined by $r_k(\overline{x}_{k-1}(t),\overline{\gamma}_{k-1}(t))$. Here, $\overline{\gamma}_{k}(t)$ is defined as $\overline{\gamma}_{k}(t) := [\gamma_{1}(t), \ldots, \gamma_{k}(t)]^\top$ with $\gamma_{k}(t):= [\gamma_{k,1,U}, \gamma_{k,1,L}, \ldots, \gamma_{k,n,U}, \gamma_{k,n,L}]^\top$.

This is done by enforcing exponentially narrowing constraint 
\begin{equation}\label{eqn:funnel}
\gamma_{k,i}(t) = (p_{k,i} - q_{k,i})e^{-\mu_{k,i}t} + q_{k,i} 
\end{equation}
such that 
\begin{align*}
    -\gamma_{k,i}(t) \leq (x_{k,i}-r_{k,i}) \leq \gamma_{k,i}(t) \ \ \forall i \in [1;n].
\end{align*}
Note that $|x_{k,i}(t=0) - r_{k,i}(t=0)| \leq p_{k,i}$.

Now define the normalized error $e_k(\overline{x}_{k}(t), \overline{\gamma}_{k}(t))$, the transformed error $\varepsilon_k(\overline{x}_{k}(t), \overline{\gamma}_{k}(t))$ and the diagonal matrix $\xi_k(\overline{x}_{k}(t), \overline{\gamma}_{k}(t))$ as
\begin{subequations} \label{eq:stage k}
    \begin{align}
    e_k(\overline{x}_{k}(t), \overline{\gamma}_{k}(t)) &= [e_{k,1}, \ldots, e_{k,n}]^\top = (\gamma_{k,d} (t))^{-1} \left(x_{k} - r_k(\overline{x}_{k-1}, t) \right), \\
    \varepsilon_k(\overline{x}_{k}(t), \overline{\gamma}_{k}(t)) &= \big[\ln\left(\frac{1+e_{k,1}}{1-e_{k,1}}\right), \ldots, \ln\left(\frac{1+e_{k,n}}{1-e_{k,n}}\right) \big]^\top, \\
    \xi_k(\overline{x}_{k}(t), \overline{\gamma}_{k}(t)) &= \frac{4 (\gamma_{k,d} (t))^{-1}}{1-e_k^\top(\overline{x}_{k}(t), \overline{\gamma}_{k}(t))e_k(\overline{x}_{k}(t), \overline{\gamma}_{k}(t))},
\end{align}
\end{subequations}
where $\gamma_{k,d} := \textsf{diag}(\gamma_{k,1,d},\ldots,\gamma_{k,n,d})$, with $\gamma_{k,i,d} = \frac{1}{2} (\gamma_{k,i,U} - \gamma_{k,i,L}) \nonumber, \forall i \in [1;n]$.

So, the intermediate control inputs $r_{k+1}(\overline{x}_{k}(t), \overline{\gamma}_{k}(t))$ to enforce the desired reach-avoid task is given by 
\begin{equation*}
    r_{k+1}(\overline{x}_{k}(t), \overline{\gamma}_{k}(t)) = - \kappa_k\varepsilon_k(\overline{x}_{k}(t), \overline{\gamma}_{k}(t))\xi_k(\overline{x}_{k}(t), \overline{\gamma}_{k}(t)), \kappa_k \in \R^+.
\end{equation*}
At the $N$-th stage, $r_{N+1}(\overline{x}_N(t), \overline{\gamma}_N(t))$ essentially serves as the actual control input $u(\overline{x}_N(t), \overline{\gamma}_N(t))$. It is important to note that $\overline{\gamma}_N(t)$ is a set of the spatiotemporal tubes in Equation \eqref{eqn:stt} for Stage 1 and all the subsequent exponentially narrowing functions in Equation \eqref{eqn:funnel} for Stages 2 through N. For ease of interpretation we use $\gamma(t) := \overline{\gamma}_N(t)$. Thus, the control input $u(\overline{x}_N(t), \gamma(t))$ is given by,
\begin{equation*}
    u(\overline{x}_N(t), \gamma(t)) = - \kappa_N\varepsilon_N(\overline{x}_N(t), \gamma(t))\xi_N(\overline{x}_N(t), \gamma(t)), \kappa_N \in \R^+.
\end{equation*}

Thus, we can design the control input to perform the reach-avoid specification for the system described in \eqref{eqn:sysdyn}.

\begin{theorem} \label{theorem_ras}
    Given an unknown nonlinear MIMO system in \eqref{eqn:sysdyn} satisfying assumptions \ref{assum:lip} and \ref{assum:pd}, a reach-avoid task, and corresponding STTs, if $y(0)$ is within the spatiotemporal tubes at time $t=0$, i.e., $\gamma_{1,i,L}(0) < y_i(0) < \gamma_{1,i,U}(0), \forall i \in [1;n]$, then the closed-form control strategies,
    \begin{subequations}\label{eqn:Control_ras}
     \begin{align}
        r_{k+1}(\overline{x}_{k}(t), \overline{\gamma}_{k}(t)) &= - \kappa_k\varepsilon_k(\overline{x}_{k}(t), \overline{\gamma}_{k}(t))\xi_k(\overline{x}_{k}(t), \overline{\gamma}_{k}(t)), k \in [1;N-1], \\
        u(\overline{x}_N(t), \gamma(t)) &= - \kappa_N\varepsilon_N(\overline{x}_N(t), \gamma(t))\xi_N(\overline{x}_N(t), \gamma(t)), 
    \end{align}    
    \end{subequations}
    will satisfy \eqref{eqn:ppc}, guiding the output $y(t)$ to the target $T$, while avoiding the unsafe set $U$. Here, $\varepsilon_k$, and $\xi_k$ are given in Equations \eqref{eq:stage I} and \eqref{eq:stage k}, respectively.
\end{theorem}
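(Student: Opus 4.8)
The plan is to adapt the standard funnel / prescribed-performance argument (as in \cite{PPC2,STT}) to the backstepping structure of \eqref{eqn:sysdyn}. The crucial structural fact is that each transformed error component $\varepsilon_{k,i}=\ln\!\big(\tfrac{1+e_{k,i}}{1-e_{k,i}}\big)$ is well defined exactly when the normalized error satisfies $e_{k,i}\in(-1,1)$, that $e_{1,i}\in(-1,1)$ is equivalent to $\gamma_{1,i,L}(t)<y_i(t)<\gamma_{1,i,U}(t)$ and $e_{k,i}\in(-1,1)$ (for $k\ge 2$) is equivalent to $|x_{k,i}(t)-r_{k,i}(t)|<\gamma_{k,i}(t)$, and that $\varepsilon_{k,i}\to\pm\infty$ as $e_{k,i}\to\pm1$. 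Hence it is enough to show that the collection of transformed errors stays uniformly bounded on $[0,\infty)$; this automatically yields \eqref{eqn:ppc}, and then the defining properties \eqref{eqn:stt} of the STTs immediately give reachability of $T$ at the prescribed time $t_c$ and avoidance of $U$ for all $t$.

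First I would set the problem up on the open set $\Omega:=\{(t,\overline{x}_N): e_{k,i}(t)\in(-1,1)\ \forall k\in[1;N],\,i\in[1;n]\}$. On $\Omega$ the feedback laws \eqref{eqn:Control_ras} are locally Lipschitz (Assumption \ref{assum:lip}), so the closed loop has a unique maximal solution on some interval $[0,\tau_{\max})$; the hypothesis $\gamma_{1,i,L}(0)<y_i(0)<\gamma_{1,i,U}(0)$ together with the construction of the Stage-$k$ funnels \eqref{eqn:funnel}, whose initial width $p_{k,i}$ is chosen so that $|x_{k,i}(0)-r_{k,i}(0)|<p_{k,i}=\gamma_{k,i}(0)$, places the initial condition in $\Omega$. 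It then remains to prove (i) uniform boundedness of every $\varepsilon_k$ on $[0,\tau_{\max})$, and (ii) $\tau_{\max}=\infty$.

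For (i) I would argue stage by stage. At Stage $1$, differentiating $e_1$ and using $\dot\varepsilon_{1,i}=\tfrac{2}{1-e_{1,i}^2}\dot e_{1,i}$ gives $\varepsilon_1$-dynamics of the schematic form $\dot\varepsilon_1=\xi_1\big(2(f_1+g_1x_2+w_1)-\dot\gamma_{1,s}\big)+(\text{bounded terms involving }\dot\gamma_{1,d})$, with $x_2=r_2+\gamma_{2,d}e_2$ and $r_2=-\kappa_1\varepsilon_1\xi_1$. Taking $V_1=\tfrac12\varepsilon_1^\top\varepsilon_1$, the cross term $\varepsilon_1^\top\xi_1 g_1 r_2$ is bounded above by a strictly negative multiple of $\|\varepsilon_1\|^2$ by Assumption \ref{assum:pd} (using $\lambda_{\min}(g_{1,s})\ge\underline{g_1}$ and the positivity and boundedness-away-from-zero of $\xi_1$ on sublevel sets), while the unknown drift $f_1$, the disturbance $w_1\in W$, the tube functions and their derivatives, and the residual $\gamma_{2,d}e_2$ are all bounded on $\Omega$; this produces $\dot V_1\le -a_1V_1+b_1$, hence boundedness of $\varepsilon_1$, $\xi_1$, $r_2$, and $\dot r_2$. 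Iterating the identical computation at Stage $k$ with $r_{k+1}=-\kappa_k\varepsilon_k\xi_k$, inheriting boundedness of $r_k,\dot r_k$ from the previous stage and again invoking Assumption \ref{assum:pd} to dominate the uncancelled terms, gives boundedness of each $\varepsilon_k$; at Stage $N$ the same estimate applies verbatim with $u$ in place of $r_{N+1}$.

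For (ii): the bounds from (i) confine each $e_{k,i}(t)$ to a compact subinterval of $(-1,1)$ for all $t\in[0,\tau_{\max})$, so the trajectory stays in a compact subset of $\Omega$, and the standard maximal-interval (escape) argument then forces $\tau_{\max}=\infty$. Therefore $e_{1,i}(t)\in(-1,1)$ for all $t\ge0$, which is precisely \eqref{eqn:ppc}, and the STT properties \eqref{eqn:stt} conclude the proof. I expect the main obstacle to be the backstepping bookkeeping: rigorously verifying that the terms left uncancelled at each stage (the unknown $f_k$, the off-diagonal part of $g_k$, the derivatives of the reference signals $\dot r_k$, and the neighbouring-stage residuals $\gamma_{k+1,d}e_{k+1}$) are genuinely bounded on $\Omega$ so that the high-gain effect of $\xi_k$ near the tube boundary dominates, and handling the non-symmetric MIMO gain $g_k$ consistently through $g_{k,s}$ and $\underline{g_k}$.
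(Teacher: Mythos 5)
Your proposal is correct and follows essentially the same route as the paper, which does not spell out an argument at all but simply defers to Theorem~4.1 of the cited spatiotemporal-tubes reference; the proof there is precisely the standard approximation-free prescribed-performance argument you sketch (maximal solution on the open set where all normalized errors lie in $(-1,1)$, stage-wise Lyapunov bounds on the transformed errors using the sign-definiteness of $g_{k,s}$ from Assumption~\ref{assum:pd} and boundedness of the unknown terms on the compact tube region, then a non-escape argument to get $\tau_{\max}=\infty$, after which \eqref{eqn:stt} yields the reach-avoid conclusion). Your identification of the delicate points (boundedness of the uncancelled backstepping residuals and handling the non-symmetric $g_k$ through $g_{k,s}$ and $\underline{g_k}$) matches exactly where the cited proof does its work.
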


\begin{proof}
    The proof follows similar to that of Theorem 4.1 of \cite{das2024spatiotemporal}.
\end{proof}

{\begin{remark}    
It is important to note that the closed-form control law in Equation \eqref{eqn:Control_ras} is both approximation-free, being independent of the system dynamics, and of low complexity, eliminating the need for iterative numerical solvers or computationally expensive online optimization.
\end{remark}}

\subsection{Hybrid Control Policy}
\label{subsec4_2}
After decomposing the specification, we leverage the STT-based control approach introduced in the previous section to address these reach-avoid tasks. By combining the results, we design a hybrid control policy ensuring that the system satisfies the full specification defined by the NBA.

We provide a lemma that correlates a specific segment in the specification NBA $\mathcal{A}$, referred to as a triplet, with the result of Theorem \ref{theorem_ras}.

\begin{lemma}
    {For a triplet $\nu = (q, q', q'')$, decomposed into a reach-avoid task with the initial set $S = L^{-1}(\sigma(q,q'))$, target set $T = L^{-1}(\sigma(q',q'')),$ and unsafe set $U = Y \setminus \left( L^{-1}(\sigma(q',q')) \cup S \cup T \right)$, let the set of spatiotemporal tubes from Equation \eqref{eqn:stt} and all the exponentially narrowing functions in Equation \eqref{eqn:funnel} be $\gamma_{\nu}(t)$ and the control policy from Equation \eqref{eqn:Control_ras}(b) in Theorem \ref{theorem_ras} be defined as $u_\nu(\overline{x}_{N}(t), \gamma_{\nu}(t))$. Then there exists $t \in \mathbb{R}_0^+$ such that the trajectory $y_{y_0 u_v}$ of $\mathcal{S}$ starting from any initial state $y_0 \in S$ under the policy $u_\nu$ reaches the target region $T$, that is, $y_{y_0 u_v} \cap T \neq \emptyset$.}
\end{lemma}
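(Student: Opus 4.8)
The plan is to derive this lemma as an almost immediate corollary of Theorem \ref{theorem_ras}: the triplet $\nu=(q,q',q'')$ has, by construction, already been cast as the reach-avoid task $(S,T,U)$ with a matching family of spatiotemporal tubes $\gamma_\nu$, so the only real work is to check that the hypothesis of Theorem \ref{theorem_ras} — namely that the output lies inside the tube at $t=0$ — is met, and then to read off reachability of $T$ from the tube invariance that the theorem provides.

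First I would recall the defining properties of $\gamma_\nu$. By the STT construction of Section \ref{sec3} (Equation \eqref{eqn:stt}), the Stage-$1$ tube $\prod_{i=1}^n[\gamma_{1,i,L}(t),\gamma_{1,i,U}(t)]$ has its initial cross-section contained in $S$ with $y_0$ inside it, i.e. $\gamma_{1,i,L}(0)<y_{0,i}<\gamma_{1,i,U}(0)$ for all $i\in[1;n]$; it reaches the target, $\prod_{i=1}^n[\gamma_{1,i,L}(t_c),\gamma_{1,i,U}(t_c)]\subseteq T$ for some $t_c\in\R^+$; and it avoids the unsafe set, $\prod_{i=1}^n[\gamma_{1,i,L}(t),\gamma_{1,i,U}(t)]\cap U=\emptyset$ for all $t\in\R_0^+$. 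The remaining components of $\gamma_\nu$ are the exponentially narrowing funnels \eqref{eqn:funnel} for Stages $2$ through $N$, whose initial widths $p_{k,i}$ are taken to dominate the initial tracking errors $|x_{k,i}(0)-r_{k,i}(0)|$, so that the full backstepping state trajectory lies inside $\gamma_\nu$ at $t=0$. Hence the hypothesis of Theorem \ref{theorem_ras} holds for the closed loop under $u_\nu(\overline{x}_N(t),\gamma_\nu(t))$, which is precisely the control law \eqref{eqn:Control_ras}(b).

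Applying Theorem \ref{theorem_ras} then gives that the closed-loop output satisfies \eqref{eqn:ppc}, i.e. $\gamma_{1,i,L}(t)<y_i(t)<\gamma_{1,i,U}(t)$ for all $(t,i)\in\R_0^+\times[1;n]$. Evaluating at $t=t_c$ yields $y_{y_0u_\nu}(t_c)\in\prod_{i=1}^n[\gamma_{1,i,L}(t_c),\gamma_{1,i,U}(t_c)]\subseteq T$, so $y_{y_0u_\nu}\cap T\neq\emptyset$, which is the claim with the witness $t=t_c$. The same invariance additionally gives $y_{y_0u_\nu}(t)\notin U$ for all $t$, so the avoid part of the triplet is respected en route, even though only reachability of $T$ is asserted in the statement.

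The step I would be most careful about is reconciling ``any $y_0\in S$'' with a single tube $\gamma_\nu$: the conditions in \eqref{eqn:stt} place the initial cross-section \emph{inside} $S$, so strictly one must either (i) restrict attention to those $y_0$ lying in that initial cross-section, or (ii) observe that the STT construction of Section \ref{sec3} (equivalently the sampling-based construction) is parameterized by the initial state and may be re-run for each $y_0\in S$, producing a correspondingly adjusted $\gamma_\nu$ that still contains $y_0$ at $t=0$ and still satisfies all three properties in \eqref{eqn:stt}. Either reading makes Theorem \ref{theorem_ras} directly applicable; I would state which convention is adopted and then keep the remainder of the proof to the two-line invocation above, since no new dynamics argument is needed beyond what Theorem \ref{theorem_ras} already supplies.
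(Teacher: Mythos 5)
Your argument is correct, and it is the argument the paper intends: the lemma is a direct consequence of Theorem \ref{theorem_ras} together with the defining properties of the tube in \eqref{eqn:stt} (initial cross-section in $S$, cross-section at some $t_c$ contained in $T$, empty intersection with $U$), so that \eqref{eqn:ppc} evaluated at $t_c$ gives $y_{y_0u_\nu}(t_c)\in T$. The paper itself, however, does not write this out --- its proof of the lemma is a one-line deferral to Theorem 4.1 of the cited STT reference, exactly as its proof of Theorem \ref{theorem_ras} is. So your writeup is strictly more informative than the paper's: it makes explicit which hypothesis of Theorem \ref{theorem_ras} must be verified (the output lying strictly inside the Stage-1 tube at $t=0$, and the funnel widths $p_{k,i}$ dominating the initial tracking errors for the higher stages), and it correctly identifies the one genuine subtlety the paper glosses over, namely that \eqref{eqn:stt} only places the initial cross-section \emph{inside} $S$ rather than covering it, so the claim ``for any $y_0\in S$'' requires either restricting to the initial cross-section or re-parameterizing the tube construction by $y_0$. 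Your resolution (re-run the STT construction per initial state, which the construction of Section \ref{sec3} permits) is the right fix and is consistent with Problem \ref{problem}'s remark that the controller depends on the initial state. No gap; your proof is a sound and more complete substitute for the paper's citation.
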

\begin{proof}
    The proof follows similar to that of Theorem 4.1 of \cite{das2024spatiotemporal}.
\end{proof}

Given an NBA $\mathcal{A} = (\mathcal{Q}, \mathcal{Q}_0, \Pi, \delta, F)$ representing the desired properties, along with an accepting run $q$, and its associated finite state-run fragment $\overline{\text{q}}$, we define a finite state transition system to provide a switching mechanism for a hybrid control policy. 

The switching mechanism is described by a finite state transition system $ \mathfrak{S} = (\mathcal{Q}_s, \mathcal{Q}_{0s}, \Pi, \delta_s),$ where $\mathcal{Q}_{0s} = \mathcal{Q}_0$, $\mathcal{Q}_s = \mathcal{Q}_{0s} \cup \mathcal{P}^p (\overline{\text{q}})$, and transition relation $(q_s,\sigma,{q'}_s) \in \delta_s$ ($q_s \xrightarrow{\sigma}_{\mathfrak{S}} {q'}_s)$ is defined as: 

\begin{itemize}
    \item for all $q_s = q_0 \in \mathcal{Q}_{0s},$
    \begin{align}
        q_0 \xrightarrow{\sigma (q_0, q')}_{\mathfrak{S}} (q_0, q', q''), \text{where} \; q_0 \xrightarrow{\sigma (q_0, q')}_\mathcal{A} q'; \nonumber
    \end{align}
    \item for all $q_s = (q, q', q'') \in \mathcal{Q}_s \setminus \mathcal{Q}_{0s}$,
    \begin{align}
        (q,q',q'') &\xrightarrow{\sigma (q', q'') }_{\mathfrak{S}} (q',q'',q'''), \text{such that} \; q, q', q'', q''' \in \mathcal{Q}, \nonumber \\ 
        &\quad q' \xrightarrow{\sigma (q', q'')}_\mathcal{A} q'', \quad q'' \xrightarrow{\sigma (q'', q''')}_\mathcal{A} q'''. \nonumber
    \end{align}
\end{itemize}

{The hybrid controller defined over $Y \times \mathcal{Q}_s$ is given by:
\begin{align}\label{eqn:hybridcontroller}
    \tilde{\mathbf u}(\overline{x}_N(t), q_s) &:= u_{q_s'}(\overline{x}_N(t), \gamma_{q_s'}(t-t_{q_s})), \forall(q_s, L(x_1), q'_s) \in \delta_s, \ \forall t \in \R_0^+,
\end{align}
where for $q_s = (q,q',q'')$, $t_{q_s} \in \mathbb{R}^+_0$ is defined as $t_{q_s} := \min \{ t \in \mathbb{R}_0^+ | y_{y_0 \mathbf{\Tilde{u}}}(t) \in \sigma(q',q'')$.}

The following theorem states that the proposed hybrid controller in (\ref{eqn:hybridcontroller}), guarantees the satisfaction of the specification given by the language of $\mathcal{A}$.

\begin{theorem}
Consider an unknown system $\mathcal{S}$ in \eqref{eqn:sysdyn} satisfying assumptions \ref{assum:lip} and \ref{assum:pd} and an NBA $\mathcal{A}$ with a finite state-run fragment $\overline{\mathsf{q}}\in\mathcal{R}^p$ for some $p\in\Pi$, corresponding to an accepting run $\mathsf{q}$. The trajectory $y_{y_0\tilde{\mathbf u}}$ of $\mathcal S$ starting from $y_0\in L^{-1}(p)$ under the hybrid controller $\tilde{\mathbf u}$ in \eqref{eqn:hybridcontroller} satisfies the language of NBA $\mathcal{A}$, that is, $\sigma(y_{y_o\tilde{\mathbf u}})\models\mathcal{A}$. 
\end{theorem}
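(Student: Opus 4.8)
The plan is to combine the per-triplet reachability guarantee (the preceding Lemma, itself a corollary of Theorem \ref{theorem_ras}) with an induction along the finite state-run fragment $\overline{\mathsf q}\in\mathcal{R}^p$, and then lift the resulting finite behavior to an accepting $\omega$-word via the self-loop on $q_0^s\in F$. First I would fix the accepting run $\mathsf q = (q_0^r,\dots,q_{m_r}^r,(q_0^s,\dots,q_{m_s}^s)^\omega)$ and its finite fragment $\overline{\mathsf q} = (q_0,q_1,\dots,q_{m_r+m_s+3})$, enumerate the triplets $\nu_j = (q_j,q_{j+1},q_{j+2})\in\mathcal{P}^p(\overline{\mathsf q})$ for $j = 0,\dots,m_r+m_s+1$, and observe that the switching system $\mathfrak S$ is, by construction, a deterministic chain that moves from state $\nu_j$ to $\nu_{j+1}$ exactly when the output enters $L^{-1}(\sigma(q_{j+1},q_{j+2}))$. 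The hybrid controller in \eqref{eqn:hybridcontroller} therefore applies, on the $j$-th segment $[t_{\nu_j},t_{\nu_{j+1}})$, the STT controller $u_{\nu_{j+1}}$ with tubes $\gamma_{\nu_{j+1}}(\cdot - t_{\nu_j})$ designed for the reach-avoid task of triplet $\nu_{j+1}$.

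The core inductive step: assuming the trajectory arrives at time $t_{\nu_j}$ with $y_{y_0\tilde{\mathbf u}}(t_{\nu_j})\in L^{-1}(\sigma(q_{j},q_{j+1})) = S_{\nu_{j+1}}$ (the initial set of the next triplet's reach-avoid task), I invoke the Lemma to conclude that under $u_{\nu_{j+1}}$ the trajectory reaches the target $T_{\nu_{j+1}} = L^{-1}(\sigma(q_{j+1},q_{j+2}))$ in finite time while staying in $Y\setminus U_{\nu_{j+1}}$, so $t_{\nu_{j+1}}$ is well-defined and finite and the trajectory on $(t_{\nu_j},t_{\nu_{j+1}})$ stays within $L^{-1}(\sigma(q_j,q_{j+1}))\cup L^{-1}(\sigma(q_{j+1},q_{j+1}))\cup L^{-1}(\sigma(q_{j+1},q_{j+2}))$. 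The base case is $t_{\nu_0} = 0$ with $y_0\in L^{-1}(p) = L^{-1}(\sigma(q_0^r,q_1^r)) = S_{\nu_0}$, which holds by hypothesis and the definition of $\mathcal{R}^p$ in \eqref{eqn4.2}. Iterating to $j = m_r+m_s+1$ produces a finite time sequence $0 = t_{\nu_0} < t_{\nu_1} < \cdots$ and shows the output trace restricted to these segments is exactly the finite word $\sigma(\overline{\mathsf q}) = (\sigma(q_0,q_1),\sigma(q_1,q_2),\dots)$, matching, via Definition \ref{Def:Trace_of_system}, a prefix of an accepting run of $\mathcal A$ (the intermediate dwell in the self-loop labels $\sigma(q_{j+1},q_{j+1})$ is consumed by a self-transition $q_{j+1}\xrightarrow{\sigma(q_{j+1},q_{j+1})}_{\mathcal A}q_{j+1}$, so it does not break acceptance).

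Next I would handle the infinite suffix. The fragment $\overline{\mathsf q}$ was built so that its last two states $q_{m_r+m_s+2},q_{m_r+m_s+3}$ equal $q_0^s,q_1^s$, i.e. the chain has "wrapped around" the accepting cycle once. After the last switch the controller state sits on a triplet whose first two coordinates are $q_0^s,q_1^s\in\mathsf q$; because $\mathfrak S$ restricted to the cycle states is again a finite deterministic chain with the same transition structure, the same inductive argument repeats indefinitely, each pass through the cycle taking finite time. Hence the time sequence extends to one with $t_r\to\infty$ as $r\to\infty$ (this divergence must be argued, not assumed — see below), and the resulting infinite trace $\sigma(y_{y_0\tilde{\mathbf u}})$ equals $\sigma(\mathsf q)\in\Pi^\omega$. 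Since $\mathsf q$ visits $q_0^s\in F$ infinitely often, $\mathrm{Inf}(\mathsf q)\cap F\neq\emptyset$, so $\sigma(\mathsf q)\in\mathcal L(\mathcal A)$ and therefore $\sigma(y_{y_0\tilde{\mathbf u}})\models\mathcal A$.

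The main obstacle I anticipate is \textbf{ruling out Zeno behavior}, i.e. showing $t_{\nu_{j+1}}-t_{\nu_j}$ does not shrink to zero across the infinitely many cycle passes so that $t_r\to\infty$. The Lemma only gives \emph{finite} reach times per triplet; to get divergence one needs a uniform lower bound on the dwell time in each segment, which should follow from the fact that there are only finitely many distinct triplets appearing in the cyclic part (so only finitely many distinct reach-avoid tasks, each reused with time-shifted tubes), combined with a uniform lower bound on the time the STT tube $\gamma_{\nu}$ needs to travel from $S_\nu$ to $T_\nu$ — essentially $t_c$ in \eqref{eqn:stt} is bounded below by a positive constant depending only on $\nu$, and there are finitely many $\nu$. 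A secondary subtlety is checking that the trace as formalized in Definition \ref{Def:Trace_of_system} is genuinely produced — in particular the third bullet's transition-time condition at each $t'_j$ — which requires that the tubes are arranged so the output passes cleanly from one label's preimage into the next; this is exactly what the unsafe-set definition $U_\nu = Y\setminus(L^{-1}(\sigma(q',q'))\cup S_\nu\cup T_\nu)$ together with the STT containment \eqref{eqn:stt} is set up to guarantee, so I would spell out that correspondence carefully rather than treat it as immediate.
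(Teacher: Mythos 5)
Your proposal is correct and follows essentially the same route as the paper's own proof---fix the accepting run, decompose its finite fragment into triplets, invoke the per-triplet reach-avoid lemma for each segment, and chain the resulting controllers through the switching system $\mathfrak{S}$, repeating over the accepting cycle---though your version is far more detailed than the paper's one-paragraph sketch. The two subtleties you flag, namely ruling out Zeno behavior so that the switching times diverge and verifying that the generated output trace actually conforms to Definition~\ref{Def:Trace_of_system}, are genuine points that the paper's proof passes over in silence, so your treatment is, if anything, the more complete one.
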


\begin{proof}
    Consider an accepting state run $\text{q} = (q_0^r, q_1^r, ... , q_{m_r}^r ,$ $ (q_0^s, q_1^s, ... , q_{m_s}^s)^\omega) \in \mathcal{Q}^\omega$ in $\mathcal{A}$ with $\sigma(q_0^r, q_1^r) = p$, for some $p \in \Pi$. Let the corresponding finite state-run be $\overline{\mathbf{q}} \in \mathcal{R}^p$, as defined in Subsection \ref{subsec_mapping}. If we employ a controller that switches between the reach-avoid control laws $u_\nu(x, t)$ for each $\nu = (q,q',q'') \in \mathcal{P}^p(\overline{\mathbf{q}})$ as established in Lemma 1, we can conclude that $\sigma(\text{q}) \in \mathcal{L}(\mathcal{A}).$ By applying the definition of the labeling function $L$, this implies that the trajectory $y_{y_0 \mathbf{\Tilde{u}}}$ of $\mathcal{S}$ starting from any initial state $y_0 \in L^{-1}(p)$ under the control policy $\mathbf{\Tilde{u}}$ in \eqref{eqn:hybridcontroller} satisfies $\sigma(y_{y_0 \mathbf{\Tilde{u}}}) \in \mathcal{L}(\mathcal{A})$. 
\end{proof}

\section{Case Studies}
\label{sec5}
In this section, we present two case studies to illustrate the efficacy of our results and their practical applicability. We applied our approach to two systems: $(i)$ a manipulator executing a pick-and-place operation and $(ii)$ an omnidirectional mobile robot performing a delivery task. 

\subsection{2-R Manipulator}
\label{subsec5_1}
\begin{figure}
    \centering
    \begin{minipage}{0.45\textwidth}
        \centering
        \includegraphics[page = {2},width=0.7\textwidth]{o_reg_automata.pdf} 
        \caption{2R Manipulator NBA}
        \label{2r_nba}
    \end{minipage}\hfill
    \begin{minipage}{0.45\textwidth}
        \centering
        \includegraphics[width=0.7\textwidth]{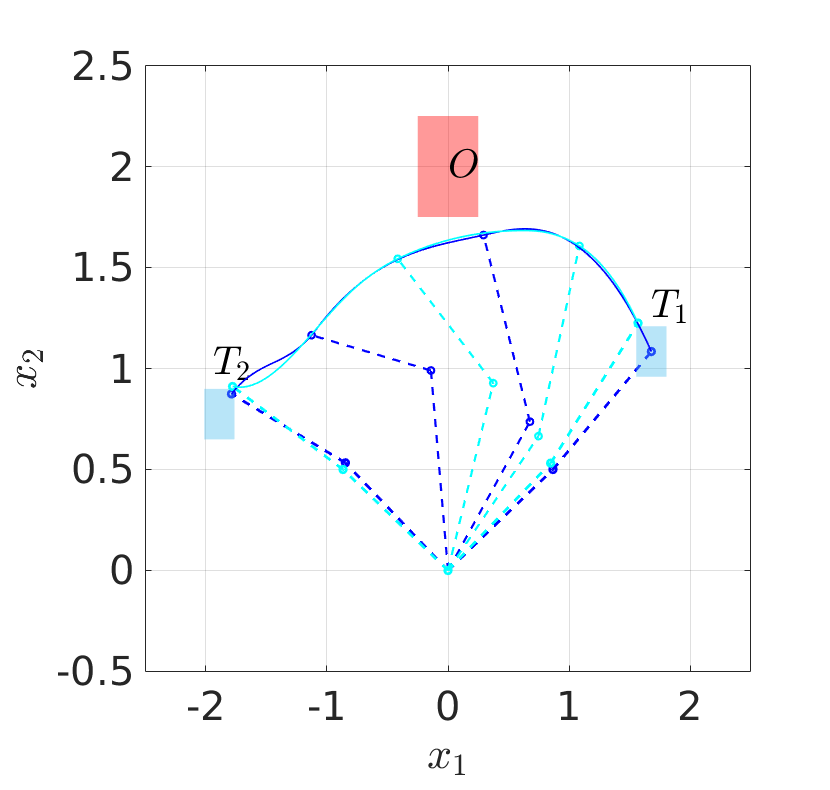} 
        \caption{Trajectory of the 2R manipulator}
        \label{2r_traj}
    \end{minipage}
\end{figure}

For the first case study, we consider a two-link 2R SCARA manipulator adapted from \cite{spong}. The following describes the model used:
\begin{gather}
    ml^2
    \begin{bmatrix}
        \frac{5}{3} + c_2 & \frac{1}{3} + \frac{1}{2}c_2 \\
        \frac{1}{2}c_2 & \frac{1}{3}
    \end{bmatrix}
    \begin{bmatrix}
        \Ddot{\theta}_1 \\
        \Ddot{\theta}_2
    \end{bmatrix} 
    +
    ml^2s_2
    \begin{bmatrix}
        -\frac{1}{2}\dot{\theta}_2^2 - \dot{\theta}_1\dot{\theta}_2\\
        \frac{1}{2}\dot{\theta}_2^2
    \end{bmatrix}
    \\
    + mgl
    \begin{bmatrix}
        \frac{3}{2}c_1 + \frac{1}{2}c_{12} \\
        \frac{1}{2}c_{12}
    \end{bmatrix}
    = 
    \begin{bmatrix}
        \tau_1(t) \\
        \tau_2(t)
    \end{bmatrix}
    + d(t), \nonumber
\end{gather}
where $m$ and $l$ are the mass and length of each link, $g$ is the acceleration due to gravity, $\tau_1(t), \tau_2(t)$ are the torque inputs at the joints, $d(t)$ is an unknown bounded disturbance, $c_1 = \cos{\theta_1}$, $c_2 = \cos{\theta_2}$, $s_2 = \sin{\theta_2}$, and $c_{12} = \cos{(\theta_1 + \theta_2)}$.

The manipulator was assigned the task of always eventually going to the targets $T_1 = [\pi /6 - \pi /18, \pi /6 + \pi / 18] \times [0.075 , 0.125]$ and $T_2 = [5\pi /6 - \pi /18, 5\pi /6 + \pi /18] \times$ $ [0.075 , 0.125]$, while always avoiding the obstacle $O = [0, \pi /18] \times [2 - \pi /10, 2 + \pi /10 ]$. 

{The set of atomic propositions is given by $\Pi = \{ p_0, p_1, p_2, p_3 \}$ with the labeling function $L:\R^2 \rightarrow \Pi$ defined as: $L(\theta_1, \theta_2) = p_0, \forall [\theta_1, \theta_2] \in O$, $L(\theta_1, \theta_2) = p_1, \forall [\theta_1, \theta_2] \in T_1$, $L(\theta_1, \theta_2) = p_2, \forall [\theta_1, \theta_2] \in T_2$, and $L(\theta_1, \theta_2) = p_3, \forall [\theta_1, \theta_2] \in \R^2 \setminus (O \cup T_1 \cup T_2)$.}
The objective is to compute a hybrid control policy ensuring the satisfaction of the specification given by the LTL formula 
$$\varphi = \Box \Diamond p_1 \wedge \Box \Diamond p_2 \wedge \Box \neg p_0,$$
or equivalently by the accepting language of the NBA $\mathcal{A}$ in Figure \ref{2r_nba}. From Figure \ref{2r_nba}, one can identify an accepting state-run $(q_0^*,q_1^*)^\omega.$
{We choose the accepting state-run $(q_0, q_1, q_0, q_1, q_0, q_1)$ which is decomposed as discussed in Section \ref{subsec_mapping}, to design a hybrid controller in Equation (\ref{eqn:hybridcontroller}).} Figure \ref{2r_traj} shows a trajectory of the system under the proposed hybrid control policy \eqref{eqn:hybridcontroller}. 

\begin{figure}
    \centering
    \begin{subfigure}[b]{0.3\textwidth}
        \centering
        \includegraphics[width= \textwidth]{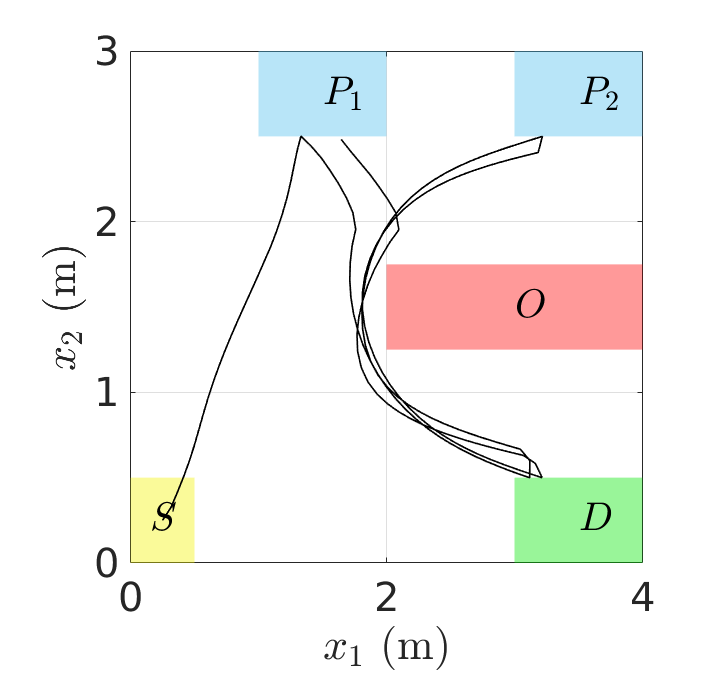}
        \label{fig:omni_traj}
    \end{subfigure}
    \begin{subfigure}[b]{0.5\textwidth}
        \centering
        \includegraphics[width= \textwidth]{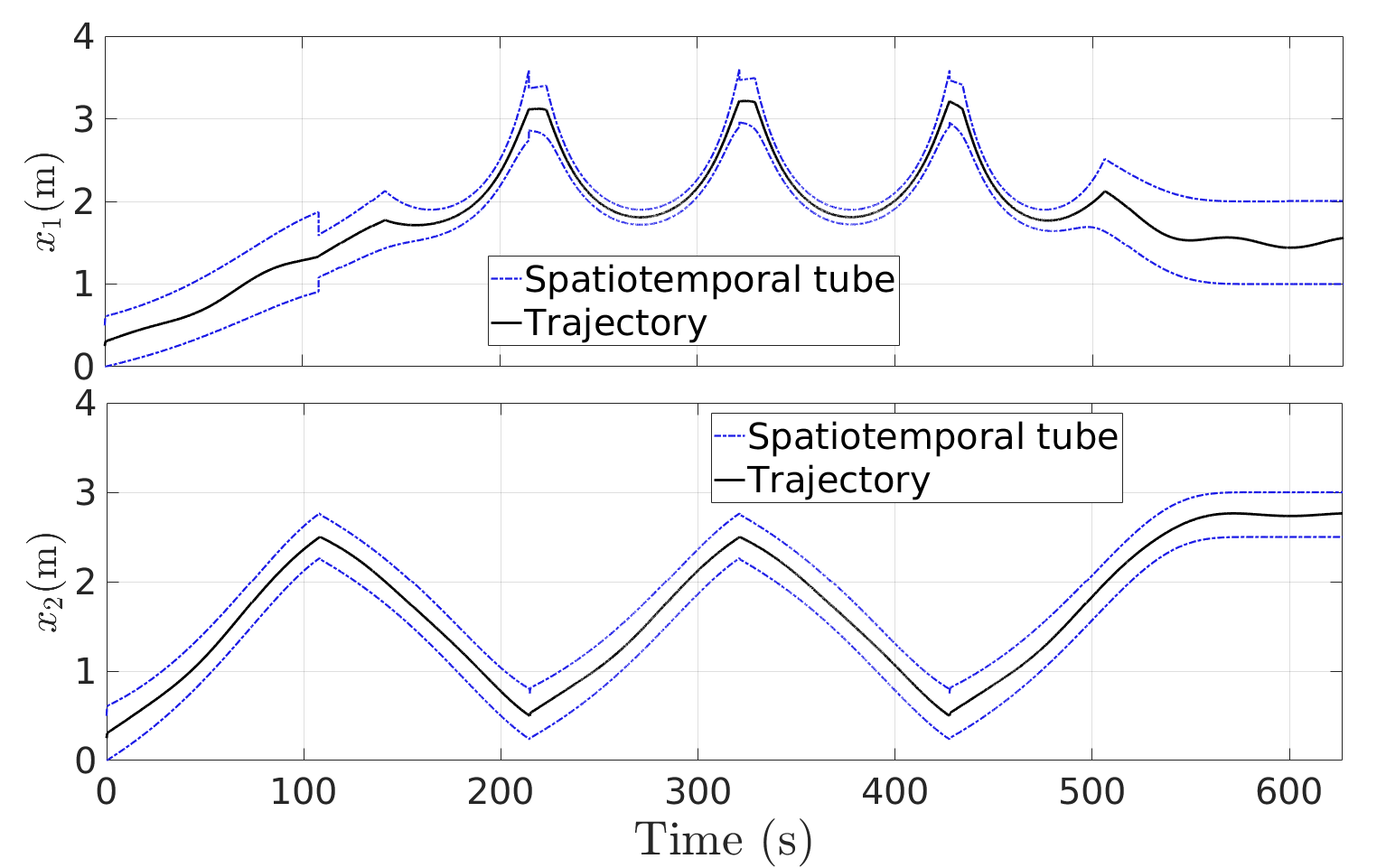}
        \label{fig:omni_tube}
    \end{subfigure}
    \caption{(a) Trajectory of the Mobile robot in 2D plane. (b) Tubes generated for each axis and the trajectory}
    \label{fig:omni}
\end{figure}

\subsection{Mobile robot}
\label{subsec_omni_ex}

For the second case study, we consider a three-wheeled omnidirectional mobile robot, adopted from \cite{liu2008omni} and defined as:
\begin{gather}
    \begin{bmatrix}
        \dot{x}_1 \\ \dot{x}_2 \\ \dot{x}_3
    \end{bmatrix}
    =
    \begin{bmatrix}
        cos(x_3) & -sin(x_3) & 0 \\
        sin(x_3) & cos(x_3) & 0 \\
        0 & 0 & 1
    \end{bmatrix}
    \begin{bmatrix}
        0 & -1 & L \\
        cos \frac{\pi}{6} & sin \frac{\pi}{6} & L \\
        -cos \frac{\pi}{6} & -sin \frac{\pi}{6} & L
    \end{bmatrix}^{-1}
    (B^T)^{-1}Ru,
\end{gather}
where the two states $x_1$ and $x_2$ indicate the robot position and $x_3$ indicates the orientation of the robot with respect to the $x_1$-axis. $R$ is the radius of the wheel, $B$ describes geometric constraints with $L$ is the radius of the robot body, and $u(t) \in \mathbb{R}^3$ are control inputs. 

The mobile robot was assigned the task of starting from location $S = [0, 0.5] \times [0, 0.5] \times [0, 2\pi]$, always eventually picking up objects from locations $P_1 = [1, 2] \times [2.5, 3] \times [0, 2\pi]$ and $P_2 = [3, 4] \times [2.5, 3] \times [0, 2\pi]$, and delivering them to location $D = [3, 4] \times [0, 0.5] \times [0, 2\pi]$, while always avoiding obstacle $O = [2, 4] \times [1.25, 1.75] \times [0, 2\pi]$, as shown in Figure \ref{fig:omni}.

{The set of atomic propositions is given by $\Pi = \{ p_0, p_1, p_2, p_3, p_4, p_5 \}$ with the labeling function $L:\R^3 \rightarrow \Pi$ defined as: 
$L(x_1, x_2, x_3) = p_0$, $\forall [x_1, x_2, x_3] \in S$, $L(x_1, x_2, x_3) = p_1, \forall [x_1, x_2, x_3] \in P_1$, $L(x_1, x_2, x_3) = p_2$, $\forall [x_1, x_2, x_3] \in P_2$, $L(x_1, x_2, x_3) = p_3, \forall [x_1, x_2, x_3] \in D$, $L(x_1, x_2, x_3) = p_4, \forall [x_1, x_2, x_3] \in O$, and $L(x_1, x_2, x_3) = p_5, \forall [x_1, x_2, x_3] \in \R^3 \setminus (S \cup P_1 \cup P_2 \cup D \cup O).$
}
The objective is to compute a hybrid control policy ensuring the satisfaction of the specification given by the LTL formula 
\begin{align*}
    \varphi &= p_0 \wedge \square(\lnot p_4) \wedge \square \lozenge p_1 \wedge \square \lozenge p_2 \wedge \square(p_1 \rightarrow ((\lnot p_2 \wedge \lnot p_4) \cup p_3)) \\
    & \quad \wedge \square (p_2 \rightarrow ((\lnot p_1 \wedge \lnot p_4) \cup p_3)).
\end{align*}
Figure \ref{fig:omni} shows the simulation results under the hybrid control policy \eqref{eqn:hybridcontroller}. 

\subsection{Discussion and Comparison}
To address $\omega$-regular objectives, various methods/tools have been proposed in the literature, including abstraction-based techniques \cite{tabuada2009verification,rungger2016scots,khaled2019pfaces,jagtap2017quest}, reachable set computation-based methods \cite{schurmann2020optimizing}, and control barrier function (CBF)-based discretization-free approaches \cite{jagtap2020formal}, \cite{anand2021compositional}. However, these techniques typically involve extensive numerical computations to design controllers that enforce complex specifications. Moreover, most of the existing approaches for LTL-controller synthesis, assume access to known system dynamics. Although \cite{NAHS} introduces a closed-form method to address $\omega$-regular objectives, it still relies on precise knowledge of system dynamics and is limited to a subset of $\omega$-regular objectives.
\\
In contrast, our proposed closed-form solution not only eliminates the need for known system dynamics but also extends to a general class of specifications expressed by $\omega$-regular languages. Furthermore, the proposed approach offers significant advantages in terms of scalability and robustness, thereby addressing critical limitations in prior synthesis techniques.

\section{Conclusion}
\label{conclusion}
In this paper, we introduced a discretization-free approach for synthesizing controllers for unknown control-affine MIMO systems to satisfy complex temporal properties expressed using $\omega$-regular languages or nondeterministic Büchi automata. The proposed framework decomposes the NBA into a sequence of reach-avoid tasks and leverages the spatiotemporal tubes (STT) framework to provide an approximation-free closed-form control policy that ensures the system's trajectory remains within the STTs, thereby guaranteeing the satisfaction of the reach-avoid task. Consequently, we propose an approximation-free closed-form hybrid control policy to satisfy the specifications expressed through the NBA. 
This makes our approach particularly suitable for real-time applications involving systems with unknown dynamics and complex specifications. Future work may explore the extension of this framework to handle input constraints.

\bibliographystyle{ieeetr}
\bibliography{sources}

\end{document}